\documentclass[ASNA,twocolumn]{USG} 
\usepackage{anyfontsize} %

\usepackage{colortbl}   
\usepackage{xcolor}     

\usepackage{steinmetz}
\graphicspath{{./images/}}

\articletype{}%
\subarticletype{}

\received{}
\revised{}
\accepted{}
\volume{}
\copyyear{}
\startpage{}

\begin{document}
\title{Deep Neural Mixed Models for Nonlinear Longitudinal Prediction with Subject-Specific Heterogeneity}
\transtitle{Guidelines for Establishing a Cytometry Laboratory}
\subtranstitle{trans-subtitle}
\author[1]{Mahmud Hasan}[https://orcid.org/0000-0003-3338-8163]
\author[2]{Niloofar Ramezani}[ https://orcid.org/0000-0001-8941-4696]
\author[3]{Hrishikesh Chakraborty}[https://orcid.org/0000-0001-9078-845X]





















\address[1]{\orgdiv{Department of Mathematics and Statistics, }\orgname{Penn State Altoona,  }%
\orgaddress{\state{PA, }\country{USA}}}

\address[2]{\orgdiv{Department of Biostatistics, }\orgname{Virginia Commonwealth University, }%
\orgaddress{\state{VA, }\country{USA}}}

\address[3]{\orgdiv{Department of Biostatistics \& Bioinformatics, }\orgname{Duke University, }%
\orgaddress{\state{NC, }\country{USA}}}

\corres{Mahmud Hasan  (\email{mnh5524@psu.edu})}

\keywords{Longitudinal data| Deep learning| Mixed-effects models| Neural networks| Random effects| Prediction| DGNMM|Heterogeneity.}

\transkeywords{Longitudinal data| Deep learning| Mixed-effects models| Neural networks| Random effects| Prediction| DGNMM.}
\abstract[ABSTRACT]{
Longitudinal health data often exhibit within-subject correlation,
subject-specific heterogeneity, and complex nonlinear predictor--response
relationships that may not be adequately captured by conventional mixed
models. We propose a Deep Generalized Neural Network Mixed Model (DGNMM)
that extends the GLMM framework by using a deep neural network to learn
nonlinear population-level relationships while incorporating directly
trainable, regularized subject-specific effects to capture individual
heterogeneity. The model is estimated using prediction-oriented
stochastic optimization with separate regularization of the
neural-network parameters and subject-specific effects. This formulation
allows us to investigate when increased nonlinear representation capacity
provides additional predictive benefit in the presence of subject-specific
heterogeneity. A simulation study varies population-level nonlinearity
and subject-specific heterogeneity and compares DGNMM with GLMM,
artificial neural network (ANN), Maity--Pal (MP), and GNMM alternatives
for continuous and binary longitudinal outcomes. In the real-data
application, DGNMM achieved the lowest MSPE for continuous depression
and anxiety prediction, while alternative models remained competitive
for binary outcomes. Exploratory subject-level analyses showed
heterogeneous predictive gains across individuals, but the magnitude of
the estimated subject-specific effects was not significantly associated
with predictive gain. Overall, increased nonlinear representation
capacity was most beneficial for continuous prediction under nonlinear
response structures, while greater subject-specific heterogeneity did
not systematically yield larger predictive gains.
}


\maketitle


\section{Introduction}\label{sec1}

Longitudinal and clustered data arise frequently in biostatistics, epidemiology, precision medicine, public health, and healthcare analytics. In such studies, repeated measurements are collected from the same subject over time. As a result, observations within a subject are correlated and require appropriate statistical methods to account for this correlation and ensure valid inference and accurate prediction. Classical approaches for analyzing longitudinal data, including linear mixed models (LMMs) \citep{Verbeke1996} and generalized linear mixed models (GLMMs) \citep{breslow1993approximate}, provide a flexible framework for modeling correlated outcomes through subject-specific random effects. These models have been extensively applied to longitudinal studies involving continuous, binary, and count outcomes. In many modern applications, digital health technologies, wearable sensors, electronic health records, and mobile health (mHealth) platforms have generated large-scale longitudinal datasets containing behavioral, physiological, and clinical information \citep{steinhubl2015digital,torous2016new}. These data provide new opportunities for predicting future health outcomes, including disease status and symptom severity such as depression and anxiety. However, the relationships between predictors and outcomes are often highly nonlinear and may not be adequately captured by linear fixed-effect models such as GLMMs \citep{breslow1993approximate}. One important application arises in schizophrenia research, where patients are monitored over time using smartphone-based digital phenotyping technologies \citep{torous2016new}. Data collected from GPS, accelerometers, call logs, text messaging activity, and other mobile sensors provide valuable information about behavioral patterns that may be associated with depression, anxiety, and other mental health outcomes \citep{onnela2021opportunities}. Accurate prediction of such outcomes requires modeling approaches that can simultaneously account for within-subject dependence, subject-specific heterogeneity, and complex nonlinear relationships between predictors and responses. More accurate longitudinal prediction may facilitate earlier identification of symptom worsening and support timely clinical decision-making for schizophrenia patients.

Deep learning methods, such as artificial neural networks (ANNs), are powerful predictive models for capturing complex nonlinear relationships between predictors and responses \citep{goodfellow2016deep}. However, standard ANN formulations generally do not explicitly account for dependence among repeated observations from the same subject. Several approaches have been proposed for modeling nonlinear relationships and correlated longitudinal data. Recurrent neural networks (RNNs) have been widely used for sequential prediction tasks \citep{choi2016using,choi2016doctor}, while mixed-effects neural network models have been developed to incorporate subject-specific dependence \citep{wortwein2023neural,kock2025deep}. Neural network approaches for repeated measures data based on subject-specific indicator variables have also been proposed \citep{maity2013repeated}, and standard ANNs have been applied to longitudinal prediction problems without explicitly modeling within-subject dependence \citep{ganesan2014prediction}. Beyond neural networks, nonlinear mixed-effects models \citep{lindstrom1990nonlinear,davidian1995nonlinear}, generalized additive models (GAMs) \citep{hastie1990generalized}, multivariate adaptive regression splines (MARS) \citep{friedman1991multivariate}, and classification and regression trees (CART) \citep{breiman1984classification} provide flexible alternatives for modeling nonlinear predictor--response relationships. Despite these advances, an important prediction question remains regarding how increased nonlinear representation capacity can be combined with subject-specific heterogeneity in longitudinal data, and when such additional flexibility provides meaningful predictive benefit.
Other neural mixed-model approaches incorporate random effects within neural network frameworks \citep{tandon2006neural,tran2017random}.

In particular, \citet{mandel2023neural} proposed the Generalized Neural Network Mixed Model (GNMM), which extends the GLMM framework \citep{breslow1993approximate} by replacing the linear population-level component with a feed-forward neural network while retaining
subject-specific random effects. The GNMM framework can accommodate
nonlinear predictor--response relationships while accounting for
within-subject dependence, and its general formulation can accommodate
feed-forward neural networks of arbitrary size, although its empirical
investigation focused primarily on shallow one- and two-hidden-layer
networks. Thus, the motivation for further increasing neural-network
complexity is not simply that nonlinear relationships cannot be
represented by GNMM. Rather, an important prediction question is whether
increased nonlinear representation capacity provides meaningful
improvement as the underlying response surface becomes progressively more
complex, and how such improvement changes with the magnitude of
subject-specific heterogeneity.

A further distinction concerns estimation. GNMM is formulated within a
mixed-model framework using likelihood- or quasi-likelihood-based
estimation of the neural-network and random-effect components. Such estimation may become computationally demanding as the complexity of the population-level neural network increases. This motivates consideration of a prediction-oriented formulation in
which the neural-network parameters and regularized subject-specific
latent effects are estimated jointly using stochastic optimization. Such
a formulation provides a computationally direct approach for combining
increased nonlinear representation capacity with subject-specific
heterogeneity in longitudinal prediction.

In this paper, we propose a Deep Generalized Neural Network Mixed Model
(DGNMM) for nonlinear longitudinal prediction in the presence of
subject-specific heterogeneity. The DGNMM decomposes the conditional
response into two components: a population-level nonlinear response
surface, represented by a multilayer feed-forward neural network, and a
trainable subject-specific latent effect shared across repeated
observations from the same individual. The neural-network weights and
subject-specific effects are estimated jointly using stochastic
gradient-based optimization. Separate $L_2$ penalties are imposed on the
network weights and subject-specific effects, allowing the complexity of
the population-level response surface and the degree of subject-specific
shrinkage to be controlled separately. The framework accommodates both continuous and binary longitudinal outcomes and is evaluated for within-subject prediction, where previous observations from an individual are used to estimate the subject-specific effect before predicting a held-out future observation.

The main contributions of this paper are as follows. First, we develop a
prediction-oriented DGNMM that combines increased nonlinear representation
capacity with regularized subject-specific latent effects and enables their
joint estimation through stochastic optimization. Second, we introduce
separate regularization of the population-level neural-network parameters
and subject-specific effects. We further provide a statistical
interpretation of the latter by showing that, in a Gaussian linear special
case, the quadratic penalty on the subject-specific effect has the same
shrinkage form as that induced by a Gaussian random intercept.

Third, we introduce a controlled simulation framework that independently
varies population-level structure from linear to increasingly nonlinear
settings and the magnitude of subject-specific heterogeneity. This design
allows us to investigate when increased nonlinear representation capacity
improves prediction, when simpler models remain preferable, and how these
patterns change with subject-specific heterogeneity. We consider both continuous and binary outcomes and compare DGNMM with GLMM, ANN, MP, and shallower GNMM alternatives using MSPE and AUC, respectively. Finally, we evaluate DGNMM using longitudinal smartphone digital-phenotyping
data from individuals with schizophrenia to predict depression and anxiety.
In addition to comparing overall predictive performance, we examine the
estimated subject-specific effects and assess whether their magnitudes are
associated with subject-level predictive gains relative to the main-effects
GLMM. Together, the simulation and real-data analyses provide complementary
evidence on the roles of nonlinear representation capacity and
subject-specific heterogeneity in longitudinal prediction. In the schizophrenia application, improved symptom prediction may support
earlier detection of symptom worsening and better clinical monitoring\citep{kahn2015schizophrenia,millan2016altering}.

The remainder of the paper is organized as follows.
Section~\ref{sec2} reviews existing models and introduces the proposed
DGNMM framework. Section~\ref{sec3} presents the estimation procedure,
prediction methodology, regularization interpretation, and performance evaluation criteria. Section~\ref{sec4} presents the controlled simulation study and comparative prediction results across different levels of nonlinear complexity and subject-specific heterogeneity.
Section~\ref{sec5} presents the schizophrenia symptom prediction
application, including the subject-specific heterogeneity analysis. Finally, Section~\ref{sec6} concludes with a discussion of the main findings and future research directions.

\section{Model and Notation}\label{sec2}

Consider longitudinal or clustered observations indexed by subject
\(i=1,\ldots,m\) and repeated measurements \(j=1,\ldots,n_i\).
Let \(Y_{ij}\) denote the response variable for the \(j\)th observation
within subject \(i\). Let
\(\mathbf{x}_{ij}=(x_{ij1},\ldots,x_{ijp})^T\) denote the vector of
population-level predictors and
\(\mathbf{z}_{ij}=(z_{ij1},\ldots,z_{ijq})^T\) denote the vector of
random-effect covariates. We assume that observations within the same subject may be correlated through subject-specific heterogeneity. Classical generalized linear mixed models (GLMMs) model the conditional
mean of the response as
\citep{breslow1993approximate,mcculloch2008generalized}
\begin{align}\label{GLMM}
g\{E(Y_{ij}\mid \mathbf{b}_i)\}
=
\mathbf{x}_{ij}^{T}\boldsymbol{\beta}
+
\mathbf{z}_{ij}^{T}\mathbf{b}_i,
\end{align}
where \(g(\cdot)\) is a known link function,
\(\boldsymbol{\beta}\) is the vector of fixed-effect coefficients, and
\(\mathbf{b}_i\) is the subject-specific random effect. Typically,
$\mathbf{b}_i \sim MVN(\mathbf{0},\mathbf{D})$,
where \(\mathbf{D}\) is the covariance matrix of the random effects. GLMMs provide a flexible framework for modeling within-subject dependence through subject-specific random effects. However, the population-level predictor
\(\mathbf{x}_{ij}^{T}\boldsymbol{\beta}\) imposes a linear structure on
the covariate effects. When the underlying response surface is strongly
nonlinear or involves complex interactions among predictors, this
linear specification may lead to inadequate prediction. On the other hand, standard artificial neural networks (ANNs) provide
flexible approximations to nonlinear predictor--response relationships but generally do not explicitly represent the subject-specific heterogeneity arising in longitudinal data
\citep{hornik1989multilayer,goodfellow2016deep}. The ANN model can be
expressed as
\begin{align}
g\{E(Y_{ij})\}
=
f_{\theta}(\mathbf{x}_{ij}),
\end{align}
where \(f_{\theta}(\cdot)\) denotes a feed-forward neural network.
To account for subject-specific heterogeneity,
\citet{maity2013repeated} proposed the Maity--Pal (MP) model, which
augments the neural network input with subject indicator variables. Let
\[
\tilde{\mathbf{x}}_{ij}
=
(\mathbf{x}_{ij}^{T},\mathbf{x}_{ijI}^{T})^{T},
\]
where \(\mathbf{x}_{ijI}\) is an \(m\)-dimensional indicator vector
identifying subject \(i\). The MP model is given by
\begin{align}\label{MP}
g\{E(Y_{ij})\}
=
f_{\theta}(\tilde{\mathbf{x}}_{ij}).
\end{align}
The MP model incorporates subject-specific information directly through
the network inputs. However, the dimension of the indicator vector
increases with the number of subjects, which may reduce computational efficiency and increase the risk of overfitting in larger longitudinal studies. To combine flexible nonlinear modeling with a conventional random-effects structure, \citet{mandel2023neural} proposed the
Generalized Neural Network Mixed Model (GNMM), defined as
\begin{align}
g\{E(Y_{ij}\mid \mathbf{b}_i)\}
=
f_{\theta}(\mathbf{x}_{ij})
+
\mathbf{z}_{ij}^{T}\mathbf{b}_i.
\end{align}
The GNMM replaces the linear population-level component of the GLMM
with a neural network while retaining the random-effects structure.
Its general formulation can accommodate multilayer feed-forward
networks, although the empirical investigation of
\citet{mandel2023neural} focused primarily on shallow one- and
two-hidden-layer networks. Building on this framework, we consider
increased nonlinear representation capacity together with
subject-specific heterogeneity for longitudinal prediction.

Specifically, we consider a Deep Generalized Neural Network Mixed Model
(DGNMM), in which the population-level component is represented by a
multilayer feed-forward neural network,
\begin{align}\label{DGNMM}
g\{E(Y_{ij}\mid \mathbf{b}_i)\}
=
f_{\theta}^{Deep}(\mathbf{x}_{ij})
+
\mathbf{z}_{ij}^{T}\mathbf{b}_i,
\end{align}
where \(f_{\theta}^{Deep}(\cdot)\) denotes a multilayer feed-forward
neural network parameterized by \(\theta\). The deep component flexibly approximates an unknown population-level response surface, while the subject-specific component accounts for heterogeneity among repeated measurements from different subjects. For simplicity, we focus on a random-intercept structure. The corresponding mixed-model representation is
\begin{align}\label{inDGNMM}
g\{E(Y_{ij}\mid b_i)\}
=
f_{\theta}^{Deep}(\mathbf{x}_{ij})
+
b_i,
\end{align}
where
\[
b_i\sim N(0,\tau^2).
\]
Equation~\eqref{inDGNMM} provides the conventional random-intercept
representation motivating the subject-specific component. For the
prediction-oriented estimation procedure developed in this work, we represent the subject-specific deviation by a directly trainable,
regularized scalar latent effect \(e_i\). The resulting working DGNMM is
\begin{align}\label{embeddingDGNMM}
g\{E(Y_{ij}\mid e_i)\}
=
f_{\theta}^{Deep}(\mathbf{x}_{ij})
+
e_i.
\end{align}
Here, \(e_i\) is shared across repeated observations from subject \(i\)
and captures persistent subject-specific deviation from the
population-level nonlinear response surface. Unlike the conventional
random effect \(b_i\), \(e_i\) is estimated directly and jointly with
the neural-network parameters rather than integrated over a prespecified
distribution. Regularization toward zero controls unrestricted
subject-specific adaptation and, as developed in the following section,
provides a connection to Gaussian random-intercept shrinkage.
The decomposition in \eqref{embeddingDGNMM} can be expressed as
\[
\underbrace{f_{\theta}^{Deep}(\mathbf{x}_{ij})}
_{\text{nonlinear population-level component}}
+
\underbrace{e_i}
_{\text{subject-specific deviation}}.
\]
Thus, the proposed formulation separates flexible estimation of the population-level nonlinear response surface from subject-specific heterogeneity. The DGNMM implementation considered in this study uses three hidden
layers with architecture
\[
p \rightarrow 64 \rightarrow 32 \rightarrow 16 \rightarrow 1,
\]
with ReLU activation functions in all hidden layers
\citep{nair2010rectified}, where \(p\) is the number of input
predictors. The hidden layers contain 64, 32, and 16 nodes,
respectively, followed by an output layer. Let
\begin{align}
\mathbf{h}_{ij}^{(1)}
=
\phi
\left(
\mathbf{W}^{(1)}
\mathbf{x}_{ij}
+
\mathbf{a}^{(1)}
\right)
\end{align}
denote the first hidden layer, where \(\phi(\cdot)\) represents the
ReLU activation function, \(\mathbf{W}^{(1)}\) is the weight matrix,
and \(\mathbf{a}^{(1)}\) is the bias vector. Similarly, deeper hidden
layers are recursively defined as
\begin{align}
\mathbf{h}_{ij}^{(\ell)}
=
\phi
\left(
\mathbf{W}^{(\ell)}
\mathbf{h}_{ij}^{(\ell-1)}
+
\mathbf{a}^{(\ell)}
\right),
\qquad
\ell=2,\ldots,L,
\end{align}
where \(L\) denotes the number of hidden layers. The final neural
network output is given by
\begin{align}\label{depp}
f_{\theta}^{Deep}(\mathbf{x}_{ij})
=
\mathbf{W}^{(L+1)}
\mathbf{h}_{ij}^{(L)}
+
\mathbf{a}^{(L+1)},
\end{align}
where
\begin{align}\label{theta1}
\theta
=
\left\{
\mathbf{W}^{(1)},\ldots,\mathbf{W}^{(L+1)},
\mathbf{a}^{(1)},\ldots,\mathbf{a}^{(L+1)}
\right\}.
\end{align}

\begin{figure}
\centering
\includegraphics[width=0.95\linewidth]{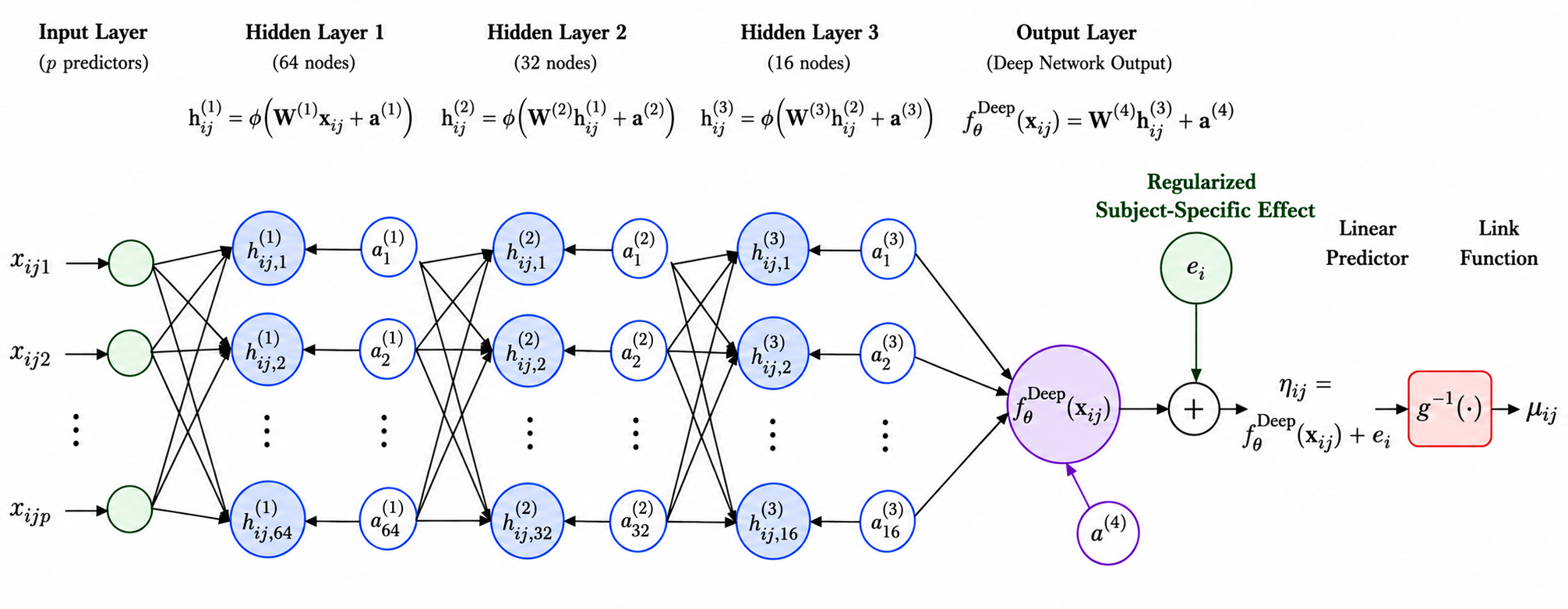}
\caption{
\textbf{Architecture of the proposed Deep Generalized Neural Network
Mixed Model (DGNMM).} The population-level nonlinear component
$f_{\theta}^{Deep}(\mathbf{x}_{ij})$ is represented by a multilayer
feed-forward neural network, while a regularized subject-specific latent
effect $e_i$ captures persistent individual deviation from the
population-level response surface. The resulting predictor is
$\eta_{ij}=f_{\theta}^{Deep}(\mathbf{x}_{ij})+e_i$, with the conditional
mean obtained through the inverse link function. Both components are
estimated jointly for longitudinal prediction.
}
\label{Figure:dgnmm}
\end{figure}

The parameter estimation procedure, regularization of the
subject-specific effects, prediction methodology, and performance
evaluation criteria for the proposed DGNMM are presented in the
following section.
\section{Estimation and Prediction Performance}\label{sec3}

Likelihood-based estimation in GLMMs and quasi-likelihood-based
estimation in neural mixed models provide principled approaches for
estimating fixed and random effects. However, as the nonlinear
population-level component becomes increasingly flexible, repeated
likelihood evaluation and integration over random effects can become
computationally demanding. Because the primary objective of the
proposed DGNMM is nonlinear longitudinal prediction, we adopt a
prediction-oriented estimation strategy in which the deep neural
network parameters and regularized subject-specific effects are learned
jointly through stochastic gradient-based optimization.

Specifically, for subject \(i\), the subject-specific contribution is
represented computationally by the trainable effect \(e_i\) introduced
in the preceding section. All repeated observations from subject \(i\)
share the same \(e_i\), which captures persistent subject-specific
heterogeneity not explained by the population-level nonlinear response
surface \(f_{\theta}^{Deep}(\mathbf{x}_{ij})\). Unlike a conventional
random effect, \(e_i\) is estimated directly together with the neural
network parameters rather than integrated over a prespecified
distribution.

For continuous outcomes, we assume
\begin{align}
Y_{ij}
=
f_{\theta}^{Deep}(\mathbf{x}_{ij})
+
e_i
+
\epsilon_{ij},
\end{align}
where
\[
\epsilon_{ij}\sim N(0,\sigma^2).
\]
For binary outcomes, we use the logistic link
\begin{align}
\mathrm{logit}\{P(Y_{ij}=1\mid e_i)\}
=
f_{\theta}^{Deep}(\mathbf{x}_{ij})
+
e_i.
\end{align}

The neural network parameters defined in the preceding section and the
subject-specific effects \(e_i\) are estimated jointly by minimizing a
regularized prediction loss using stochastic gradient-based optimization
with backpropagation and the Adam algorithm \citep{kingma2015adam}. In
addition to regularization of the neural network weights, we impose an
\(L_2\) penalty on the subject-specific effects. The latter shrinks
subject-specific deviations toward the population-level response
surface and prevents subjects with relatively few repeated
measurements from receiving excessively large effects.

For continuous outcomes, the proposed objective function is
\begin{align}\label{conloss}
\mathcal{L}_{C}
=&
\frac{1}{N}
\sum_{i=1}^{m}
\sum_{j=1}^{n_i}
\left(
Y_{ij}
-
\hat{Y}_{ij}
\right)^2
+
\lambda_{W}
\sum_{\ell=1}^{L+1}
\left\|
\mathbf{W}^{(\ell)}
\right\|_2^2
+
\lambda_{e}
\sum_{i=1}^{m}e_i^2,
\end{align}
where
\[
N=\sum_{i=1}^{m}n_i,
\]
\(\hat{Y}_{ij}\) denotes the predicted response,
\(\lambda_W>0\) controls regularization of the neural network weights,
and \(\lambda_e>0\) controls shrinkage of the subject-specific effects.

For binary outcomes, the corresponding regularized binary
cross-entropy objective is
\begin{align}\label{BinaryLoss}
\mathcal{L}_{B}
=&
-
\frac{1}{N}
\sum_{i=1}^{m}
\sum_{j=1}^{n_i}
\left[
Y_{ij}\log(\hat{\mu}_{ij})
+
(1-Y_{ij})
\log(1-\hat{\mu}_{ij})
\right]
\nonumber\\
&+
\lambda_{W}
\sum_{\ell=1}^{L+1}
\left\|
\mathbf{W}^{(\ell)}
\right\|_2^2
+
\lambda_{e}
\sum_{i=1}^{m}e_i^2.
\end{align}

The two regularization parameters play different roles.
The parameter \(\lambda_W\) controls the complexity of the nonlinear
population-level response surface, whereas \(\lambda_e\) determines the
degree of shrinkage of subject-specific deviations toward zero. Thus,
large values of \(\lambda_e\) favor population-level prediction,
whereas smaller values permit stronger subject-specific adaptation.

The penalty on the subject-specific effects also provides a direct
connection between the prediction-oriented formulation and conventional
random-intercept models. This connection is summarized in the following
proposition.

\begin{proposition}
Let us consider the Gaussian longitudinal model
$Y_{ij}=\mathbf{x}_{ij}^{T}\boldsymbol{\beta}+e_i+\epsilon_{ij}$,
$i=1,\ldots,m$, $j=1,\ldots,n_i$, where
$\epsilon_{ij}\sim N(0,\sigma^2)$ independently and
$e_i\sim N(0,\tau^2)$. Then, conditional on $\sigma^2$ and $\tau^2$,
maximizing the joint density with respect to
$\boldsymbol{\beta}$ and
$\mathbf{e}=(e_1,\ldots,e_m)^T$ is equivalent to minimizing
\begin{align}
Q(\boldsymbol{\beta},\mathbf{e})
&=
\sum_{i=1}^{m}\sum_{j=1}^{n_i}
\left[
Y_{ij}-\mathbf{x}_{ij}^{T}\boldsymbol{\beta}-e_i
\right]^2
+
\frac{\sigma^2}{\tau^2}\sum_{i=1}^{m}e_i^2.
\label{eq:penalized_Q}
\end{align}
\end{proposition}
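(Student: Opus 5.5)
The plan is to write out the joint density of $(\mathbf{Y},\mathbf{e})$ explicitly, as in Henderson's mixed-model equations, and show that its negative logarithm equals a positive multiple of $Q(\boldsymbol{\beta},\mathbf{e})$ plus terms free of $(\boldsymbol{\beta},\mathbf{e})$. Conditional on $e_i$, the observations $Y_{ij}$ are independent with $Y_{ij}\mid e_i\sim N(\mathbf{x}_{ij}^T\boldsymbol{\beta}+e_i,\sigma^2)$. The $e_i$ are independent $N(0,\tau^2)$. Hence the joint density factors as
\begin{align*}
p(\mathbf{Y},\mathbf{e}\mid\boldsymbol{\beta})
=\prod_{i=1}^{m}\Bigl[\prod_{j=1}^{n_i}\frac{1}{\sqrt{2\pi\sigma^2}}\exp\Bigl\{-\frac{(Y_{ij}-\mathbf{x}_{ij}^T\boldsymbol{\beta}-e_i)^2}{2\sigma^2}\Bigr\}\Bigr]\frac{1}{\sqrt{2\pi\tau^2}}\exp\Bigl\{-\frac{e_i^2}{2\tau^2}\Bigr\}.
\end{align*}

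Taking logarithms gives
\begin{align*}
\log p(\mathbf{Y},\mathbf{e}\mid\boldsymbol{\beta})
=C(\sigma^2,\tau^2)-\frac{1}{2\sigma^2}\sum_{i=1}^{m}\sum_{j=1}^{n_i}(Y_{ij}-\mathbf{x}_{ij}^T\boldsymbol{\beta}-e_i)^2-\frac{1}{2\tau^2}\sum_{i=1}^{m}e_i^2,
\end{align*}
where
\begin{align*}
C(\sigma^2,\tau^2)=-\tfrac{N}{2}\log(2\pi\sigma^2)-\tfrac{m}{2}\log(2\pi\tau^2).
\end{align*}
Since $\sigma^2$ and $\tau^2$ are held fixed, $C$ is a constant. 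Multiplying by $-2\sigma^2<0$ then yields
\begin{align*}
-2\sigma^2\log p=Q(\boldsymbol{\beta},\mathbf{e})-2\sigma^2 C.
\end{align*}

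Because $\log$ is strictly increasing and multiplication by a negative constant reverses order, the two optimization problems have the same set of optimizers. That is, $\arg\max_{\boldsymbol{\beta},\mathbf{e}}p=\arg\min_{\boldsymbol{\beta},\mathbf{e}}Q$.

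The only subtle point is interpretive rather than computational. The quantity being maximized must be the joint density $p(\mathbf{Y},\mathbf{e})$, treating $\mathbf{e}$ as an argument, and not the marginal likelihood with $\mathbf{e}$ integrated out. I will state this explicitly.

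It may also be worth noting that $Q$ is a strictly convex quadratic in $\mathbf{e}$ for fixed $\boldsymbol{\beta}$, so each $e_i$ has a closed-form minimizer. Setting the derivative to zero gives
\begin{align*}
\hat e_i=\frac{n_i}{n_i+\sigma^2/\tau^2}\,\bar r_i,\qquad \bar r_i=\frac{1}{n_i}\sum_{j=1}^{n_i}(Y_{ij}-\mathbf{x}_{ij}^T\boldsymbol{\beta}).
\end{align*}
This exhibits the BLUP-type shrinkage that the paper connects to the penalty $\lambda_e\sum_i e_i^2$ in \eqref{conloss}. Dividing $Q$ by $N$ makes the identification $\lambda_e=\sigma^2/(N\tau^2)$ apparent.
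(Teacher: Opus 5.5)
Your proposal is correct and follows essentially the same route as the paper's proof. Both factor the joint density into the Gaussian conditional likelihood of $\mathbf{Y}$ given $\mathbf{e}$ and the $N(0,\tau^2)$ density of the $e_i$, take the negative logarithm, drop the term free of $(\boldsymbol{\beta},\mathbf{e})$, and rescale by $2\sigma^2$ to obtain $Q$. Your additional closed-form shrinkage expression for $\hat e_i$ and the identification $\lambda_e=\sigma^2/(N\tau^2)$ are also correct and agree with the paper's subsequent remark.
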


\begin{proof}
Under the stated model, the conditional distribution of $Y_{ij}$ given
$e_i$ is
$Y_{ij}\mid e_i\sim
N(\mathbf{x}_{ij}^{T}\boldsymbol{\beta}+e_i,\sigma^2)$.
Hence,
\[
p(Y_{ij}\mid e_i,\boldsymbol{\beta})
=
\frac{1}{\sqrt{2\pi\sigma^2}}
\exp\left\{
-\frac{1}{2\sigma^2}
\left[
Y_{ij}-\mathbf{x}_{ij}^{T}\boldsymbol{\beta}-e_i
\right]^2
\right\}.
\]

Conditional on the subject-specific effects, the observations are
independent. Therefore, the conditional likelihood is
\begin{align*}
L(\boldsymbol{\beta},\mathbf{e}\mid\mathbf{Y})
&=
\prod_{i=1}^{m}\prod_{j=1}^{n_i}
p(Y_{ij}\mid e_i,\boldsymbol{\beta})\\
&=
(2\pi\sigma^2)^{-N/2}
\exp\left\{
-\frac{1}{2\sigma^2}
\sum_{i=1}^{m}\sum_{j=1}^{n_i}
\left[
Y_{ij}-\mathbf{x}_{ij}^{T}\boldsymbol{\beta}-e_i
\right]^2
\right\},
\end{align*}
where $N=\sum_{i=1}^{m}n_i$.

Since $e_i\sim N(0,\tau^2)$, its density is
\[
p(e_i)
=
\frac{1}{\sqrt{2\pi\tau^2}}
\exp\left\{-\frac{e_i^2}{2\tau^2}\right\}.
\]
Assuming independence across subjects, the joint density of
$\mathbf{e}$ is
\[
p(\mathbf{e})
=
\prod_{i=1}^{m}p(e_i)
=
(2\pi\tau^2)^{-m/2}
\exp\left\{
-\frac{1}{2\tau^2}\sum_{i=1}^{m}e_i^2
\right\}.
\]

Combining the conditional likelihood and the Gaussian density of the
subject-specific effects gives
\[
p(\mathbf{Y},\mathbf{e}\mid\boldsymbol{\beta})
\propto
\exp\left\{
-\frac{1}{2\sigma^2}
\sum_{i=1}^{m}\sum_{j=1}^{n_i}
\left[
Y_{ij}-\mathbf{x}_{ij}^{T}\boldsymbol{\beta}-e_i
\right]^2
-\frac{1}{2\tau^2}\sum_{i=1}^{m}e_i^2
\right\}.
\]

Taking the negative logarithm and omitting terms that do not depend on
$\boldsymbol{\beta}$ or $\mathbf{e}$ yields
\[
-\log p(\mathbf{Y},\mathbf{e}\mid\boldsymbol{\beta})
=
\frac{1}{2\sigma^2}
\sum_{i=1}^{m}\sum_{j=1}^{n_i}
\left[
Y_{ij}-\mathbf{x}_{ij}^{T}\boldsymbol{\beta}-e_i
\right]^2
+
\frac{1}{2\tau^2}\sum_{i=1}^{m}e_i^2
+C,
\]
where $C$ is constant with respect to
$\boldsymbol{\beta}$ and $\mathbf{e}$.

Thus, maximizing the joint density is equivalent to minimizing
\[
\frac{1}{2\sigma^2}
\sum_{i=1}^{m}\sum_{j=1}^{n_i}
\left[
Y_{ij}-\mathbf{x}_{ij}^{T}\boldsymbol{\beta}-e_i
\right]^2
+
\frac{1}{2\tau^2}\sum_{i=1}^{m}e_i^2.
\]
Multiplying by the positive constant $2\sigma^2$ does not change the
minimizer and gives
\[
Q(\boldsymbol{\beta},\mathbf{e})
=
\sum_{i=1}^{m}\sum_{j=1}^{n_i}
\left[
Y_{ij}-\mathbf{x}_{ij}^{T}\boldsymbol{\beta}-e_i
\right]^2
+
\frac{\sigma^2}{\tau^2}\sum_{i=1}^{m}e_i^2,
\]
which proves the result.
\end{proof}
\medskip

\noindent
\begin{remark}
Proposition~1 provides a statistical interpretation of the
regularization imposed on the subject-specific effects in DGNMM. In the
Gaussian linear special case, the random-intercept formulation induces
a quadratic penalty proportional to
\[
\frac{\sigma^2}{\tau^2}\sum_{i=1}^{m}e_i^2.
\]
Thus, the \(L_2\) penalty used in DGNMM has the same quadratic shrinkage
form as that induced by a Gaussian random intercept. Because the
prediction loss in \eqref{conloss} is expressed as an average over
\(N\) observations, the corresponding coefficient under this
normalization is \(\sigma^2/(N\tau^2)\). In the proposed DGNMM,
\(\lambda_e\) is treated as a tuning parameter and selected using
validation prediction performance rather than estimated through
variance components. Larger values of \(\lambda_e\) produce stronger
shrinkage toward zero, whereas smaller values permit greater
subject-specific adaptation. The DGNMM extends this shrinkage principle
to nonlinear longitudinal prediction by replacing the linear
population-level component with
\(f_{\theta}^{Deep}(\mathbf{x}_{ij})\).
\end{remark}
\medskip

The regularization parameters are selected without using the holdout
observations. In the numerical studies, $\lambda_W$ and $\lambda_e$ are
selected from prespecified candidate values using validation prediction
performance, after which the selected values are used for the final
model fit. This ensures that the holdout observations are used only for the final evaluation of predictive performance.

\begin{algorithm}[ht]
\caption{Parameter Estimation Procedure for DGNMM}
\label{alg:dgnmm}
\begin{enumerate}

\item Initialize the neural network parameters \(\theta\) and
subject-specific effects \(e_i\), \(i=1,\ldots,m\).

\item For each stochastic update, compute the linear predictor for the
current observation,
\[
\eta_{ij}
=
f_{\theta}^{Deep}(\mathbf{x}_{ij})
+
e_i.
\]

\item Compute the predicted response \(\hat{Y}_{ij}\) for continuous
outcomes or the predicted probability \(\hat{\mu}_{ij}\) for binary
outcomes.

\item Evaluate the corresponding regularized objective function
defined in \eqref{conloss} or \eqref{BinaryLoss}, including both the
network-weight penalty and the subject-specific effect penalty.

\item Compute gradients of the regularized objective function with
respect to \(\theta\) and \(e_i\) using backpropagation.

\item Update \(\theta\) and \(e_i\) using the Adam optimization
algorithm.

\item Repeat Steps 2--6 over successive observations and training
epochs until the convergence criterion is satisfied.

\item Obtain the final estimates \(\hat{\theta}\) and
\(\hat{e}_i\), and compute longitudinal predictions using the fitted
model.

\end{enumerate}
\end{algorithm}
Let \(\hat{\theta}\) and \(\hat{e}_i\) denote the resulting estimates
after convergence. The fitted linear predictor for an existing subject
is
\[
\widehat{\eta}_{ij}
=
f_{\hat{\theta}}^{Deep}(\mathbf{x}_{ij})
+
\hat{e}_i.
\]
For continuous outcomes, the predicted response is
\[
\hat{Y}_{ij}
=
\widehat{\eta}_{ij}.
\]
For binary outcomes, the predicted probability is
\[
\hat{\mu}_{ij}
=
\frac{\exp(\widehat{\eta}_{ij})}
{1+\exp(\widehat{\eta}_{ij})}.
\]

The primary prediction task considered in this paper is
within-subject longitudinal prediction. Specifically, the final
observation from each subject is reserved as a holdout observation,
while the preceding observations from that subject are available for
estimating the subject-specific effect. This design evaluates the
ability of the proposed model to predict a future response for a
subject with an observed longitudinal history.

For continuous outcomes, predictive accuracy is assessed using the
Mean Squared Prediction Error (MSPE) \citep{hastie2009elements},
defined as
\begin{align}\label{MSPE}
MSPE
=
\frac{1}{m}
\sum_{i=1}^{m}
\left(
Y_{i,n_i}
-
\hat{Y}_{i,n_i}
\right)^2,
\end{align}
where \(Y_{i,n_i}\) is the observed holdout outcome for subject \(i\)
and \(\hat{Y}_{i,n_i}\) is the corresponding predicted value. Smaller
MSPE values indicate better predictive performance.

For binary outcomes, predictive performance is assessed using the Area
Under the Receiver Operating Characteristic Curve (AUC)
\citep{hanley1982auc}. The AUC is calculated from the observed holdout
responses \(Y_{i,n_i}\) and the corresponding predicted probabilities
\(\hat{\mu}_{i,n_i}\). Larger AUC values indicate better discrimination,
with a value of one corresponding to perfect discrimination and a value
of 0.5 corresponding to random classification.

MSPE and AUC are used as the primary evaluation measures because they
directly reflect the two prediction tasks considered here: accuracy of
continuous-response prediction and discrimination for binary outcomes, respectively. Restricting evaluation to these task-specific measures
also provides a common and interpretable basis for comparing all
competing longitudinal prediction methods without introducing multiple redundant performance criteria. The simulation study in the following section examines how the predictive performance of DGNMM changes as the complexity of the
population-level response surface and the magnitude of subject-specific
heterogeneity are varied independently.
\section{Simulation Study}\label{sec4}
We conducted a simulation study to evaluate the predictive performance
of the proposed Deep Generalized Neural Network Mixed Model (DGNMM) for
longitudinal and clustered data. The simulation was designed to examine
how the predictive benefit of increased nonlinear representation
capacity changes with the degree of population-level nonlinearity and
the magnitude of subject-specific heterogeneity. The proposed method was
compared with generalized linear mixed models (GLMM), standard
artificial neural networks (ANN), neural networks with subject
indicators (MP), and generalized neural network mixed models with one
and two hidden layers (GNMM-1 and GNMM-2).

\subsection{Simulation Design}

Let $i=1,\ldots,m$ denote subjects and $j=1,\ldots,n_i$ denote repeated
observations within subject $i$. We set $m=100$, and the number of
observations for each subject was generated as
$n_i\sim\mathrm{Poisson}(6)+2$, ensuring that each subject had multiple
observations available for model fitting and prediction. For each
observation, six independent binary covariates were generated as
$x_{ij1},\ldots,x_{ij6}\sim\mathrm{Bernoulli}(0.5)$. A subject-specific
random intercept was generated as $b_i\sim N(0,\tau^2)$, where $\tau$
controls the magnitude of subject-specific heterogeneity. For continuous
outcomes, the random error term was generated as
$\epsilon_{ij}\sim N(0,0.0025)$.

To systematically vary the degree of population-level nonlinearity, we
defined the response surface as
$f_{\gamma}(\mathbf{x}_{ij})=(1-\gamma)f_L(\mathbf{x}_{ij})
+\gamma f_{NL}(\mathbf{x}_{ij})$, where
$\gamma\in\{0,0.25,0.50,0.75,1\}$. The parameter $\gamma$ controls the
degree of nonlinearity, with $\gamma=0$ corresponding to a completely
linear response surface and $\gamma=1$ corresponding to the fully
nonlinear response surface. The intermediate values represent gradual
transitions between the linear and nonlinear settings and allow the
predictive benefit of increased nonlinear representation capacity to be
examined directly.

The linear component was defined as
$f_L(\mathbf{x}_{ij})=\mathbf{x}_{ij}^{T}\boldsymbol{\beta}_L$. For
continuous outcomes,
$\boldsymbol{\beta}_L\sim MVN(0.5\mathbf{1}_6,I_6)$, whereas for binary
outcomes,
$\boldsymbol{\beta}_L\sim MVN(0.3\mathbf{1}_6,3I_6)$.

To construct the nonlinear component, three nonlinear indicator
variables were defined as
$Z_{ij1}=I(x_{ij1}+x_{ij2}=1)$,
$Z_{ij2}=I(x_{ij3}+x_{ij4}\neq1)$, and
$Z_{ij3}=I(x_{ij5}+x_{ij6}=1)$. Let
$\mathbf{Z}_{ij}=(Z_{ij1},Z_{ij2},Z_{ij3})^T$. The nonlinear component
was defined as
$f_{NL}(\mathbf{x}_{ij})=\mathbf{Z}_{ij}^{T}
\boldsymbol{\beta}_{NL}$. For continuous outcomes,
$\boldsymbol{\beta}_{NL}\sim MVN(2\mathbf{1}_3,I_3)$, whereas for binary
outcomes,
$\boldsymbol{\beta}_{NL}\sim MVN(3\mathbf{1}_3,I_3)$.
For each value of $\gamma$, continuous outcomes were generated as
$Y_{ij}=f_{\gamma}(\mathbf{x}_{ij})+b_i+\epsilon_{ij}$. For binary
outcomes, data were generated according to
$\mathrm{logit}(\mu_{ij})=f_{\gamma}(\mathbf{x}_{ij})+b_i$ and
$Y_{ij}\sim\mathrm{Bernoulli}(\mu_{ij})$.

Subject-specific heterogeneity was varied independently of the degree of
population-level nonlinearity. We considered
$\tau\in\{0,0.25,0.50,1.00\}$, where $\tau=0$ represents the absence of
subject-specific heterogeneity and larger values correspond to
increasing between-subject variation. Crossing the five values of
$\gamma$ with the four values of $\tau$ resulted in 20 simulation
settings for each outcome type. This design allows the effects of
population-level nonlinearity and subject-specific heterogeneity on
longitudinal prediction to be examined separately and jointly.

Network parameters in the ANN, MP, GNMM, and DGNMM models were estimated using stochastic gradient-based optimization with backpropagation and the Adam algorithm. For the GNMM competitors, two network architectures were considered. GNMM-1 contained one hidden layer, whereas GNMM-2 contained two hidden layers. Two sizes, five and ten
nodes, were considered for the first hidden layer, with five nodes in
the second hidden layer for GNMM-2. The proposed DGNMM employed a deep
neural network architecture
$6\rightarrow64\rightarrow32\rightarrow16\rightarrow1$, with ReLU
activation functions in all hidden layers.
The DGNMM was fitted with separate $L_2$ regularization terms for the neural network weights and subject-specific effects, governed by $\lambda_W$ and $\lambda_e$, respectively. The tuning parameters were selected using prediction performance on validation observations
formed from the training data, without using the final holdout
observations.

For each simulated dataset, the final observation from each subject was
reserved as a holdout test observation, while all preceding observations
from the same subject were used for model fitting and estimation of the
subject-specific effect. Thus, the simulation focuses on within-subject
longitudinal prediction, where historical observations from a subject
are available for predicting a future response. Predictive performance
was assessed using MSPE for continuous outcomes and AUC for binary
outcomes.

All simulation results were based on 500 independent replications for
each combination of $\gamma$ and $\tau$. For each replication, the
final observation from each subject was used only for evaluating
prediction performance, and the resulting MSPE or AUC was summarized
across the 500 replications.

The simulation study was designed to address three primary questions:
(i) whether increased nonlinear representation capacity improves
prediction as the population-level response surface becomes increasingly
nonlinear; (ii) how relative predictive performance changes with the magnitude of subject-specific heterogeneity; and (iii) whether the regularized DGNMM remains competitive when the underlying response structure is
linear or only mildly nonlinear.

\subsection{Results Discussion}
Figure~\ref{fig:MSPE5} presents the continuous-outcome MSPE results when
5 nodes are used in the first hidden layer of the competing shallow
neural-network models. Across the 20 combinations of $\gamma$ and
$\tau$, DGNMM achieves the lowest MSPE in 15 settings, whereas GLMM is
best in the remaining 5 settings. The pattern of these wins is highly
structured. When $\gamma=0$, GLMM is the best-performing method for all
four values of $\tau$, reflecting correct specification of the linear
population-level response. At $\gamma=0.25$, DGNMM becomes the
best-performing method for $\tau=0$, $0.25$, and $0.50$, while GLMM
remains best when $\tau=1$. Once the degree of nonlinearity reaches
$\gamma=0.50$, DGNMM has the lowest MSPE for every considered value of
$\tau$, and this dominance continues at $\gamma=0.75$ and
$\gamma=1.00$.
\begin{figure}[ht]
\centering
\includegraphics[width=0.75\linewidth]
{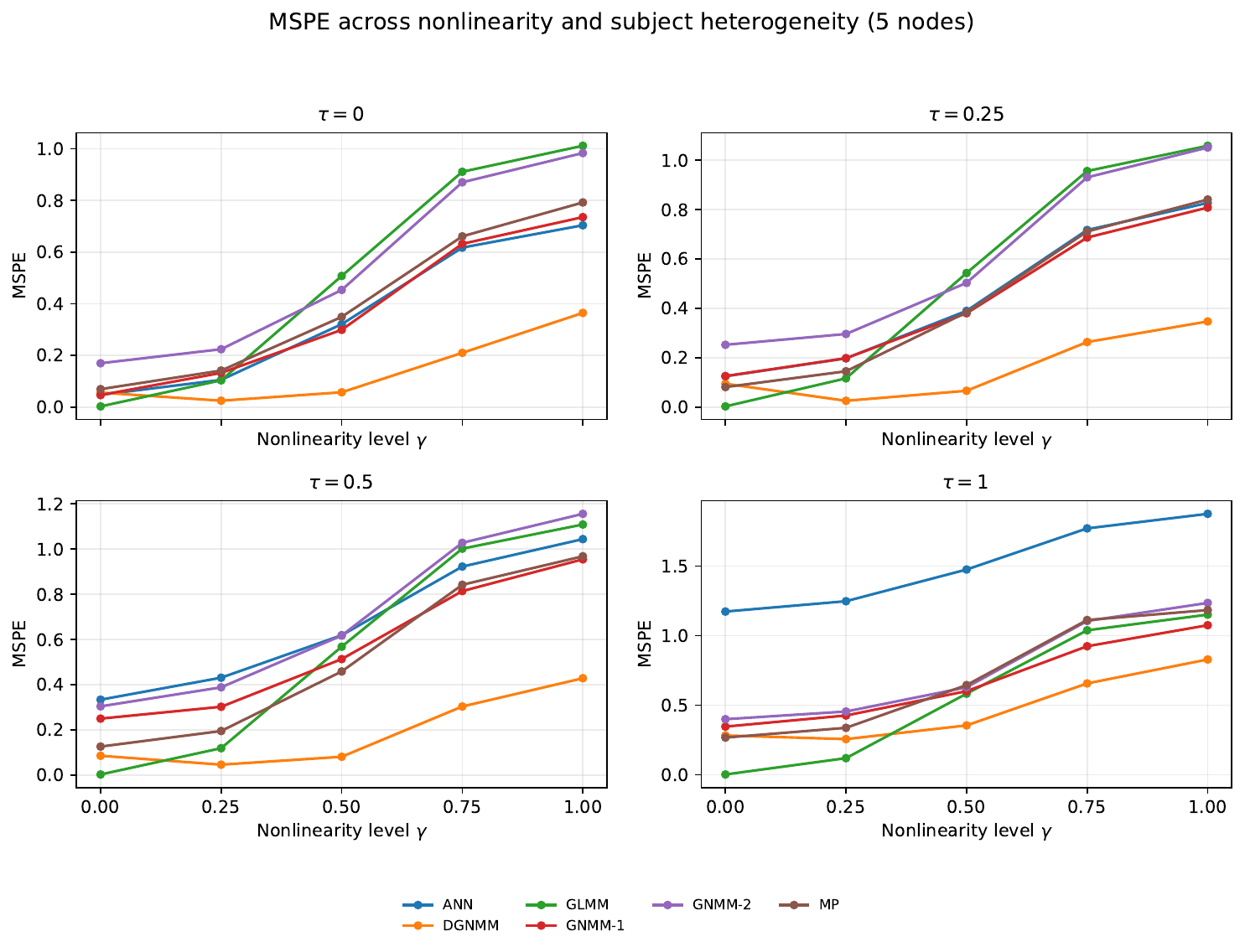}
\caption{
MSPE comparison of DGNMM, GNMM-1, GNMM-2, ANN, GLMM, and MP across
different levels of population-level nonlinearity $\gamma$ and
subject-specific heterogeneity $\tau$ for continuous outcomes. Results
are presented for competing neural-network specifications with 5 nodes
in the first hidden layer. The proposed DGNMM employs a deep neural
network architecture with hidden layers of sizes 64, 32, and 16.
For each combination of $\gamma$ and $\tau$, the final observation from
each subject was reserved as a holdout test sample, and prediction
performance was evaluated using MSPE over 500 simulation replications.
Lower MSPE indicates better predictive performance.
}
\label{fig:MSPE5}
\end{figure}
The magnitude of the differences becomes substantial under stronger
nonlinearity. For example, at $\gamma=1$ and $\tau=0$, the MSPE is
approximately 0.36 for DGNMM compared with approximately 1.01 for
GLMM. At $\tau=0.50$, the corresponding values are approximately 0.43
for DGNMM and 1.10 for GLMM. Even at the largest heterogeneity level,
$\tau=1$, DGNMM retains the lowest MSPE at $\gamma=1$, with an MSPE of
approximately 0.83 compared with approximately 1.15 for GLMM. These
results indicate a clear transition in the preferred model: GLMM is
strongest under complete linearity, whereas DGNMM becomes preferable as
the population-level response departs from linearity.
\begin{figure}[ht]
\centering
\includegraphics[width=0.75\linewidth]
{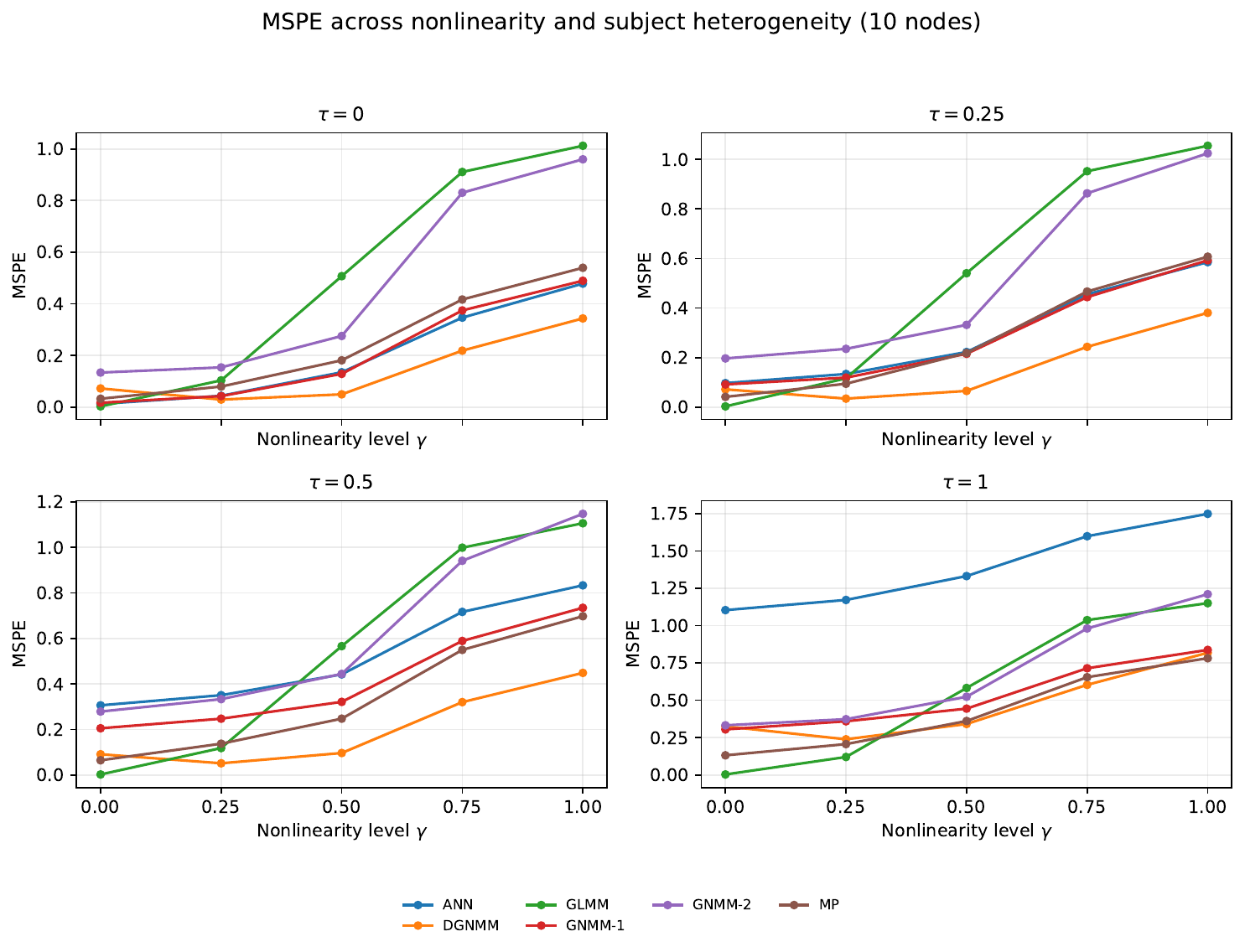}
\caption{
MSPE comparison of DGNMM, GNMM-1, GNMM-2, ANN, GLMM, and MP across
different levels of population-level nonlinearity $\gamma$ and
subject-specific heterogeneity $\tau$ for continuous outcomes. Results
are presented for competing neural-network specifications with 10 nodes
in the first hidden layer. The proposed DGNMM employs a deep neural
network architecture with hidden layers of sizes 64, 32, and 16.
For each combination of $\gamma$ and $\tau$, the final observation from
each subject was reserved as a holdout test sample, and prediction
performance was evaluated using MSPE over 500 simulation replications.
Lower MSPE indicates better predictive performance.
}
\label{fig:MSPE10}
\end{figure}
Figure~\ref{fig:MSPE10} shows that the same general pattern persists
when the first hidden layer of the competing shallow neural-network
models is increased to 10 nodes. DGNMM achieves the lowest MSPE in
14 of the 20 settings, GLMM in 5 settings, and MP in 1 setting. As in
the 5-node comparison, GLMM is best for all four values of $\tau$ when
$\gamma=0$. At $\gamma=0.25$, DGNMM is best for $\tau=0$, $0.25$,
and $0.50$, whereas GLMM remains best at $\tau=1$. DGNMM then achieves
the lowest MSPE for all four values of $\tau$ at $\gamma=0.50$ and
$\gamma=0.75$.

At $\gamma=1$, DGNMM remains best for $\tau=0$, $0.25$, and $0.50$.
For example, at $\tau=0$, the MSPE is approximately 0.34 for DGNMM
compared with approximately 1.01 for GLMM, while at $\tau=0.50$ the
corresponding values are approximately 0.45 and 1.10. The only setting
in which another nonlinear competitor becomes best is
$\gamma=1,\tau=1$, where MP has a slightly lower MSPE than DGNMM.
Thus, increasing the width of the shallower networks does not remove
the principal pattern observed in Figure~\ref{fig:MSPE5}: DGNMM is
most advantageous when moderate or strong nonlinear population-level
structure is present, but it is not uniformly optimal in every
high-heterogeneity setting. Comparison of Figures~\ref{fig:MSPE5} and~\ref{fig:MSPE10} also
clarifies the role of subject-specific heterogeneity. The predictive
advantage of DGNMM does not increase monotonically with $\tau$. Under
mild nonlinearity, for example, DGNMM is preferred at
$\tau\leq0.50$, but GLMM becomes preferable at $\tau=1$. In contrast,
under moderate and strong nonlinearity, DGNMM remains the best or nearly
best method over most heterogeneity levels. The continuous-response
results therefore indicate that nonlinear population-level complexity
is the primary setting in which increased representation capacity
provides a clear predictive benefit, while the magnitude of
subject-specific heterogeneity modifies, rather than uniformly
strengthens, that benefit.

\begin{figure}[ht]
\centering
\includegraphics[width=0.75\linewidth]
{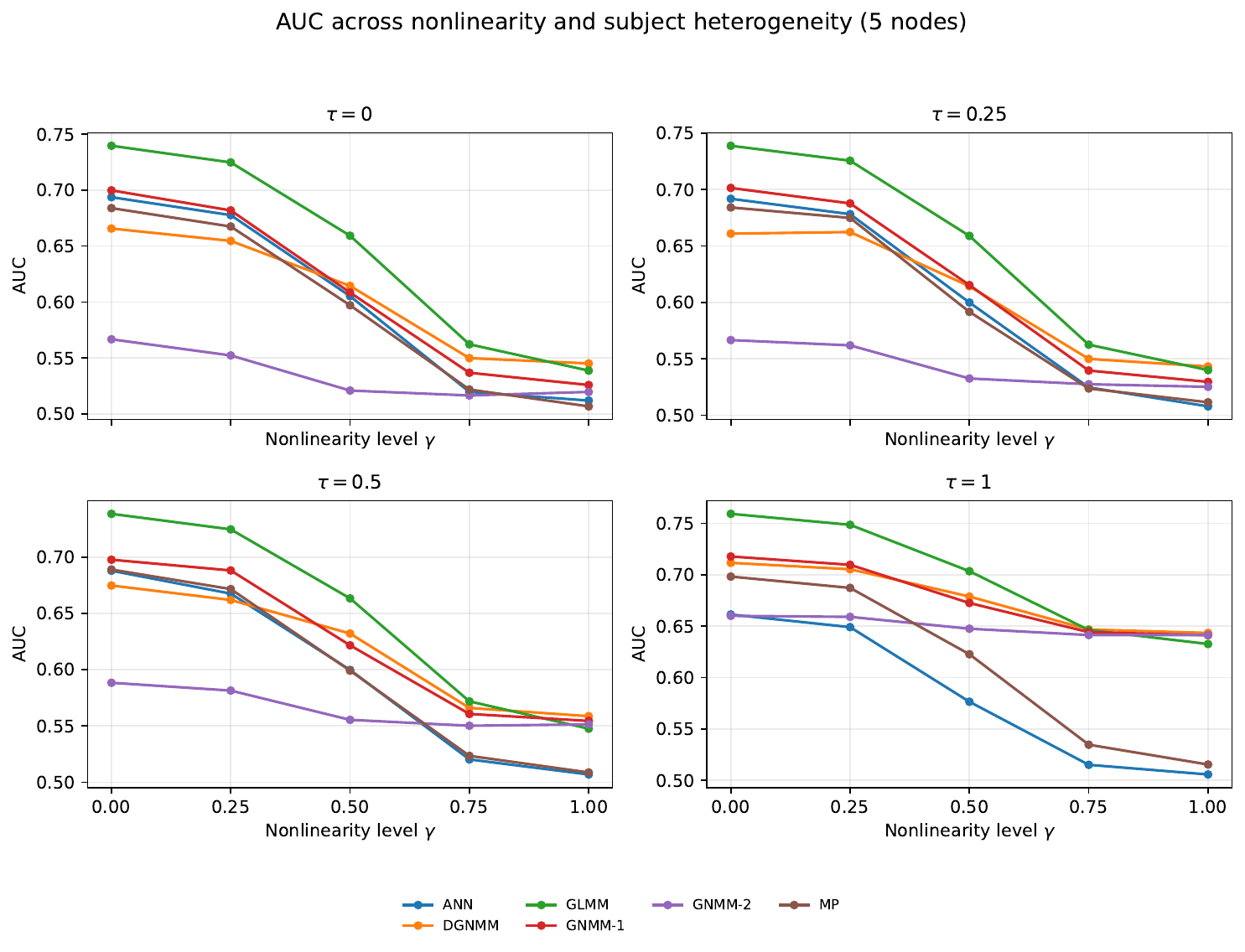}
\caption{
AUC comparison of DGNMM, GNMM-1, GNMM-2, ANN, GLMM, and MP across
different levels of population-level nonlinearity $\gamma$ and
subject-specific heterogeneity $\tau$ for binary outcomes. Results are
presented for competing neural-network specifications with 5 nodes in
the first hidden layer. The proposed DGNMM employs a deep neural network
architecture with hidden layers of sizes 64, 32, and 16. For each
combination of $\gamma$ and $\tau$, the final observation from each
subject was reserved as a holdout test sample, and prediction performance
was evaluated using AUC over 500 simulation replications. Larger AUC
indicates better discrimination.
}
\label{fig:AUC5}
\end{figure}
Figure~\ref{fig:AUC5} presents the binary-outcome results for the
5-node comparison. The pattern is markedly different from that observed
for MSPE. GLMM achieves the highest AUC in 15 of the 20 settings,
whereas DGNMM is best in 5 settings. In particular, GLMM is the
best-performing method for all four values of $\tau$ when $\gamma=0$
and remains best across most settings at $\gamma=0.25$ and
$\gamma=0.50$. For example, when $\gamma=0$ and $\tau=0$, the AUC is
approximately 0.74 for GLMM compared with approximately 0.67 for
DGNMM. At $\tau=1$, the corresponding AUC values are approximately
0.76 and 0.72.

The relative advantage of DGNMM emerges only under stronger nonlinear
structure. In the 5-node comparison, DGNMM attains the highest AUC at
$\gamma=1$ for all four values of $\tau$, and it also becomes the
best-performing method at $\gamma=0.75,\tau=1$. Thus, all five DGNMM
wins occur in settings characterized by strong nonlinearity, with one
of them also requiring the largest level of subject-specific
heterogeneity. At $\gamma=1,\tau=1$, DGNMM achieves an AUC of
approximately 0.64, whereas ANN and MP fall to approximately 0.51 and
0.52, respectively. GLMM and the GNMM alternatives remain relatively
close, indicating that the binary advantage is considerably smaller
than the continuous-response MSPE advantage.
\begin{figure}[ht]
\centering
\includegraphics[width=0.75\linewidth]
{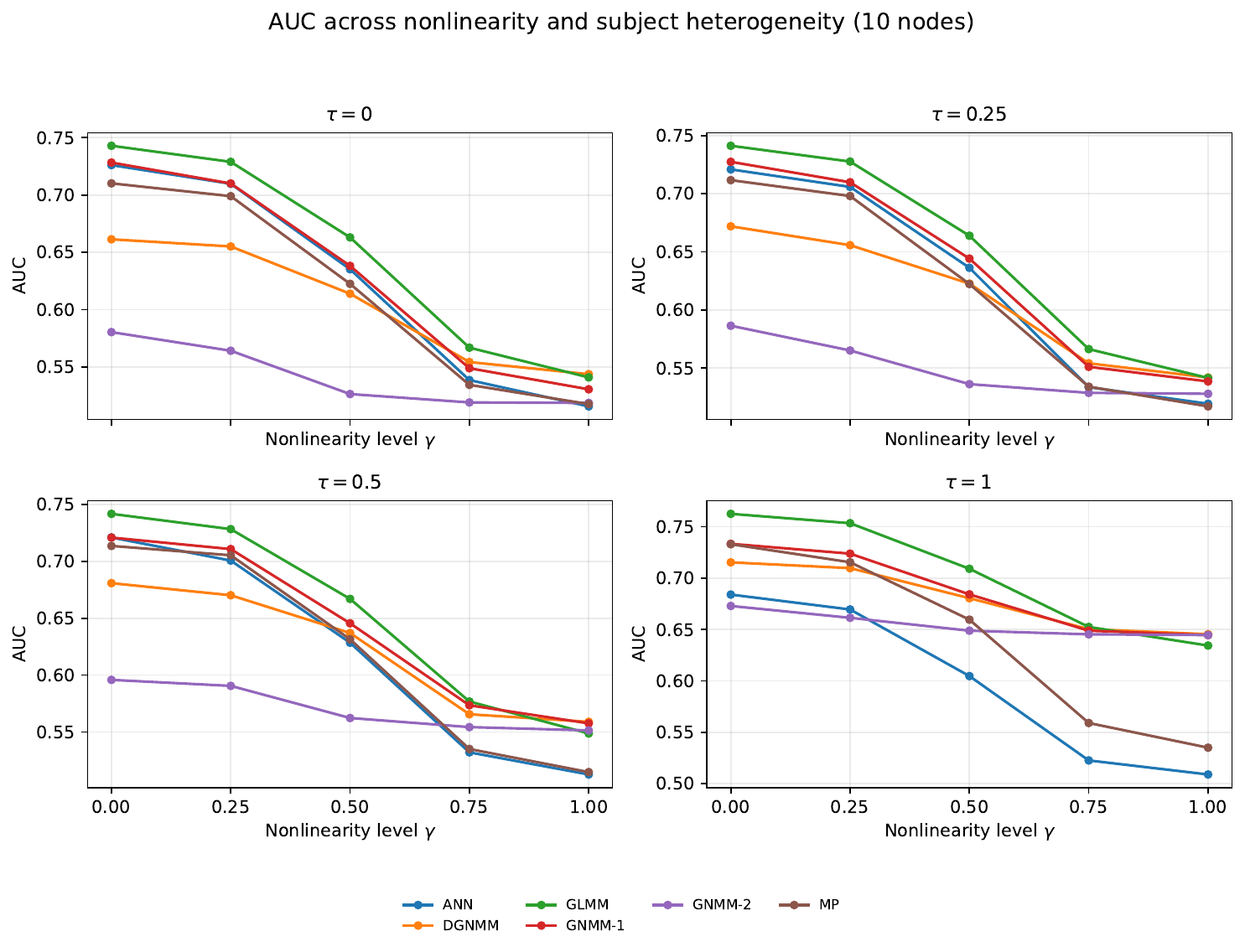}
\caption{
AUC comparison of DGNMM, GNMM-1, GNMM-2, ANN, GLMM, and MP across
different levels of population-level nonlinearity $\gamma$ and
subject-specific heterogeneity $\tau$ for binary outcomes. Results are
presented for competing neural-network specifications with 10 nodes in
the first hidden layer. The proposed DGNMM employs a deep neural network
architecture with hidden layers of sizes 64, 32, and 16. For each
combination of $\gamma$ and $\tau$, the final observation from each
subject was reserved as a holdout test sample, and prediction performance
was evaluated using AUC over 500 simulation replications. Larger AUC
indicates better discrimination.
}
\label{fig:AUC10}
\end{figure}
Figure~\ref{fig:AUC10} reinforces this conclusion. With 10 nodes in
the first hidden layer of the competing shallow networks, GLMM achieves
the highest AUC in 16 of the 20 settings, while DGNMM is best in the
remaining 4 settings. The four DGNMM wins occur at $\gamma=1$ across
the four values of $\tau$. Thus, unlike the continuous-outcome results,
increasing nonlinear representation capacity does not produce an
advantage at mild or moderate levels of nonlinearity. GLMM remains the
preferred method throughout $\gamma=0$, $0.25$, $0.50$, and $0.75$
across the considered levels of heterogeneity.

At the strongest nonlinear setting, however, DGNMM becomes the
best-performing method across all four values of $\tau$. The
differences are generally modest relative to the continuous-response
case. For instance, at $\gamma=1$ and $\tau=1$, DGNMM and GNMM-2 have
AUC values near 0.64, with GLMM and GNMM-1 also close to this level,
whereas ANN and MP fall to approximately 0.51 and 0.54, respectively.
Consequently, the binary results suggest that increased nonlinear
representation capacity becomes useful primarily when the departure
from linearity is sufficiently strong, but the improvement over
conventional and shallower mixed-model alternatives remains modest.

Taken together, the binary results show that outcome type strongly
affects the value of additional nonlinear representation capacity. In
the 5-node comparison, DGNMM is best in 5 of 20 settings, compared
with 15 of 20 for the continuous MSPE analysis. In the 10-node
comparison, the corresponding numbers are 4 of 20 for binary AUC and
14 of 20 for continuous MSPE. Hence, the advantage of DGNMM is much
more pronounced for continuous prediction than for binary
discrimination in the simulation settings considered here.

Overall, the simulations identify a clear boundary for the usefulness
of increased nonlinear representation capacity. For continuous
outcomes, GLMM is uniformly preferable under complete linearity, but
DGNMM becomes the best-performing method in most settings as soon as
moderate nonlinear structure is introduced and remains dominant across
the stronger nonlinear settings. For binary outcomes, GLMM remains
competitive over a much broader range of $\gamma$, and DGNMM becomes
best primarily at the strongest nonlinear setting. Moreover, the
predictive advantage of DGNMM does not increase monotonically with
$\tau$ in either outcome type. These findings indicate that the value
of the proposed model is determined primarily by the complexity of the
population-level response structure, with subject-specific
heterogeneity further modifying relative performance.

We next evaluate the proposed DGNMM using longitudinal smartphone
digital-phenotyping data for schizophrenia symptom prediction. The data
application and corresponding results are presented in
Section~\ref{sec5}.
\section{Real Data Application: Schizophrenia Symptom Prediction}
\label{sec5}

\subsection{Data Description}

We evaluated the predictive performance of the proposed DGNMM using a
longitudinal schizophrenia dataset previously analyzed by
\citep{torous2016new}. The study consisted of 15 individuals diagnosed
with schizophrenia who were monitored using the Beiwe smartphone-based
digital phenotyping platform. During a 90-day study period, GPS and
accelerometer measurements were continuously collected from participants'
smartphones, together with repeated self-reported assessments of depression
and anxiety symptoms. The number of observations per subject ranged from
2 to 121 days, with an average of 53 observations per individual.

We considered depression severity and anxiety severity as two longitudinal
prediction outcomes using the same 15 daily mobility-related features
derived from the GPS and accelerometer data. The continuous symptom scores
were used to evaluate prediction accuracy through MSPE. In a complementary
analysis, each outcome was dichotomized at its sample median to define
low- and high-symptom categories, and discrimination was evaluated using
AUC. Observations with scores above the median were classified as high
symptom severity, while observations with scores at or below the median
were classified as low symptom severity.

We compared DGNMM with GNMM-1, GNMM-2, GLMM, ANN, and MP. For DGNMM,
we used the architecture
$15\rightarrow64\rightarrow32\rightarrow16\rightarrow1$.
In addition to overall prediction performance, we further examined the
estimated subject-specific effects and assessed whether their magnitudes
were associated with subject-level predictive gains relative to the
main-effects GLMM.

\begin{table}[ht]
\centering
\caption{MSPE comparison of DGNMM, GNMM-1, GNMM-2, GLMM, ANN, and MP for depression and anxiety prediction. Lower MSPE indicates better predictive performance.}
\label{tab:schizophrenia_mspe}
\begin{tabular}{llc}
\hline
Outcome & Method & MSPE \\
\hline
Depression & DGNMM      & 0.324665 \\
           & GNMM-1     & 0.731966 \\
           & GNMM-2     & 0.748917 \\
           & ANN        & 0.986315 \\
           & GLMM Main Effects & 0.546328 \\
           & GLMM + 3 Interactions & 0.608985 \\
           & GLMM + 5 Interactions & 0.588429 \\
           & MP         & 0.632566 \\
\hline
Anxiety    & DGNMM      & 0.035504 \\
           & GNMM-1     & 0.300347 \\
           & GNMM-2     & 0.262234 \\
           & ANN        & 0.485459 \\
           & GLMM Main Effects & 0.068290 \\
           & GLMM + 3 Interactions & 0.097573 \\
           & GLMM + 5 Interactions & 0.116300 \\
           & MP         & 0.100010 \\
\hline
\end{tabular}
\end{table}

\subsection{Predictive Performance}

Table~\ref{tab:schizophrenia_mspe} presents the MSPE results for the
continuous depression and anxiety outcomes. DGNMM achieved the lowest
MSPE for both outcomes. For depression, DGNMM obtained an MSPE of
0.325, compared with 0.732 for GNMM-1, 0.749 for GNMM-2, 0.986 for ANN,
and 0.633 for MP. The main-effects GLMM obtained an MSPE of 0.546,
while the interaction-augmented GLMM specifications produced MSPE values
of 0.609 and 0.588.

For anxiety, DGNMM achieved an MSPE of 0.036, compared with 0.300 for
GNMM-1, 0.262 for GNMM-2, 0.485 for ANN, and 0.100 for MP. The
main-effects GLMM also performed relatively well, with an MSPE of 0.068,
but its prediction error remained larger than that of DGNMM.

These results show that DGNMM provided substantial predictive gains for
the continuous symptom outcomes in this application. For depression,
the MSPE of DGNMM was approximately 41\% lower than that of the
main-effects GLMM and approximately 56\% lower than that of GNMM-1.
For anxiety, DGNMM reduced MSPE by approximately 48\% relative to the
main-effects GLMM and by approximately 88\% relative to GNMM-1.
Thus, the improvement was observed relative to both conventional and
shallower neural mixed-model alternatives. However, these predictive
gains should not by themselves be interpreted as evidence of a strongly
nonlinear population-level mobility--symptom relationship, since the
DGNMM prediction combines the population-level neural representation
with regularized subject-specific effects.

\begin{table}[ht]
\centering
\caption{AUC comparison of DGNMM, GNMM-1, GNMM-2, GLMM, ANN, and MP for classifying high versus low depression and anxiety symptoms. Larger AUC values indicate better classification performance.}
\label{tab:schizophrenia_auc}
\begin{tabular}{llc}
\hline
Outcome & Method & AUC \\
\hline
Depression & DGNMM & 0.989 \\
           & GNMM-1 & 0.982 \\
           & GNMM-2 & 0.961 \\
           & ANN & 0.870 \\
           & GLMM Main Effects & 0.875 \\
           & GLMM + 3 Interactions & 1.000 \\
           & GLMM + 5 Interactions & 1.000 \\
           & MP & 0.886 \\
\hline
Anxiety    & DGNMM & 0.836 \\
           & GNMM-1 & 0.786 \\
           & GNMM-2 & 0.964 \\
           & ANN & 0.614 \\
           & GLMM Main Effects & 0.857 \\
           & GLMM + 3 Interactions & 0.786 \\
           & GLMM + 5 Interactions & 0.786 \\
           & MP & 0.743 \\
\hline
\end{tabular}
\end{table}

For the binary symptom classification analysis,
Table~\ref{tab:schizophrenia_auc} shows a more heterogeneous pattern.
For depression, DGNMM achieved an AUC of 0.989, exceeding GNMM-1
(0.982), GNMM-2 (0.961), ANN (0.870), the main-effects GLMM (0.875),
and MP (0.886). However, the GLMM specifications with three and five
interaction terms each achieved an AUC of 1.000. Given the small number
of subjects and the resulting number of holdout observations, these
near-perfect classification results should be interpreted cautiously.
For anxiety, DGNMM achieved an AUC of 0.836. GNMM-2 produced the
highest AUC of 0.964, followed by the main-effects GLMM at 0.857 and
DGNMM at 0.836. DGNMM nevertheless outperformed GNMM-1 (0.786), the
interaction-augmented GLMM specifications (0.786), MP (0.743), and ANN
(0.614). Thus, unlike the continuous-outcome results, DGNMM was not the best-performing method for either binary classification task.

Taken together, the continuous and binary analyses demonstrate that the relative performance of DGNMM differs across schizophrenia symptom prediction tasks. For continuous symptom severity, DGNMM achieved the lowest MSPE for both depression and anxiety, indicating improved prediction of individual symptom scores relative to the main-effects
GLMM and the shallower GNMM specifications. The improvement was
particularly pronounced for anxiety severity, for which DGNMM reduced
MSPE by approximately 48\% relative to the main-effects GLMM and 88\%
relative to GNMM-1.

The classification results provide a more heterogeneous picture. For
distinguishing high versus low depression symptoms, DGNMM achieved an
AUC of 0.989, although the interaction-augmented GLMM specifications
achieved AUC values of 1.000. For high versus low anxiety symptoms,
GNMM-2 achieved the highest AUC of 0.964, followed by the main-effects
GLMM at 0.857 and DGNMM at 0.836. Thus, DGNMM showed its clearest
predictive benefit when estimating continuous depression and anxiety
severity rather than when classifying symptoms into high- and
low-severity groups.

This distinction is consistent with the simulation results, where the
predictive benefit of increased nonlinear representation capacity was
more pronounced for continuous prediction than for binary
discrimination. In the present application, these findings suggest that
DGNMM may be particularly useful for predicting variation in continuous
schizophrenia symptom severity from longitudinal smartphone-derived
mobility information, while simpler or shallower alternatives may remain
competitive for binary symptom classification.

\subsection{Subject-Specific Heterogeneity Analysis}
\label{sec:heterogeneity}

We additionally examined the subject-specific component of DGNMM and
whether its magnitude was related to the predictive benefit of the model.
For each individual, subject-specific heterogeneity was summarized by the
magnitude of the estimated DGNMM subject-specific effect, $|\hat e_i|$.
Because the final observation from each subject was used as the
within-subject holdout observation, let $Y_i$ denote the observed response
for the held-out final observation of subject $i$, and let
$\widehat{Y}_{i,\mathrm{GLMM}}$ and
$\widehat{Y}_{i,\mathrm{DGNMM}}$ denote the corresponding predictions
obtained from the main-effects GLMM and DGNMM, respectively. The
subject-specific predictive gain of DGNMM relative to the main-effects
GLMM was defined as

\[
G_i =
\left(Y_i-\widehat{Y}_{i,\mathrm{GLMM}}\right)^2
-
\left(Y_i-\widehat{Y}_{i,\mathrm{DGNMM}}\right)^2,
\qquad i=1,\ldots,m.
\]

Here, $m$ denotes the number of subjects. Thus, $G_i>0$ indicates that
DGNMM produced a smaller squared prediction error than the GLMM for
subject $i$, $G_i<0$ indicates better prediction by the GLMM, and
$G_i=0$ indicates equal squared prediction errors for the two models.

\begin{figure}[ht]
\centering

\begin{minipage}{0.35\textwidth}
\centering
\includegraphics[width=\textwidth]
{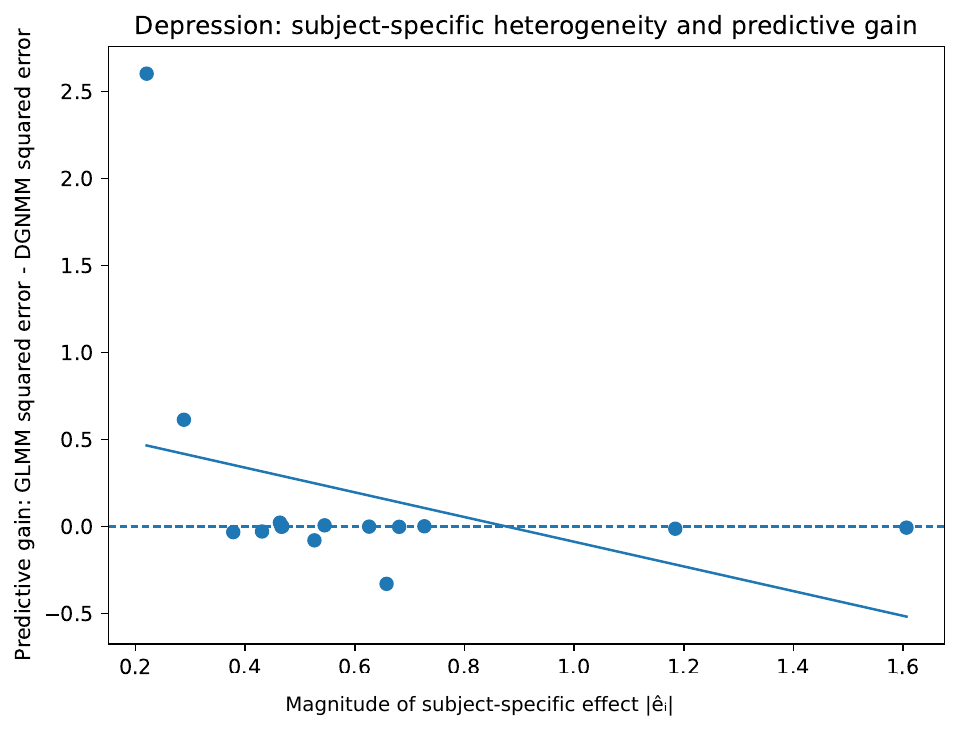}

\textbf{(a) Depression}
\end{minipage}
\hfill
\begin{minipage}{0.35\textwidth}
\centering
\includegraphics[width=\textwidth]
{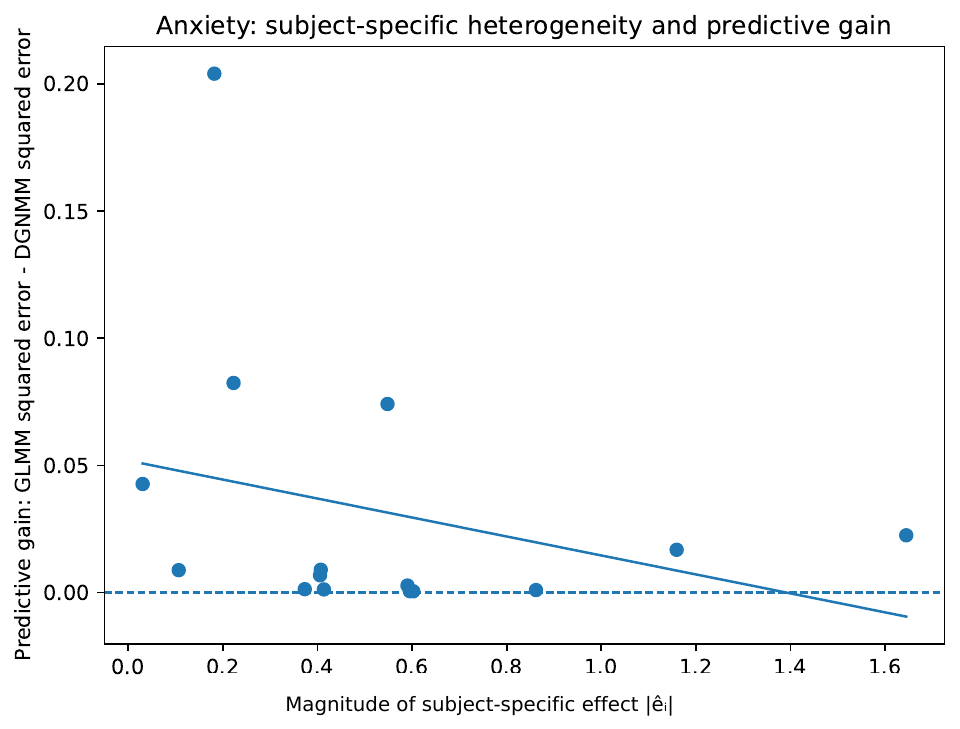}

\textbf{(b) Anxiety}
\end{minipage}

\caption{
Relationship between the magnitude of the estimated DGNMM
subject-specific effect $|\hat e_i|$ and subject-specific predictive
gain relative to the main-effects GLMM for (a) depression and
(b) anxiety. Positive predictive gain indicates a smaller squared
prediction error for DGNMM, whereas negative gain indicates a smaller
squared prediction error for GLMM. The Spearman correlations between
$|\hat e_i|$ and predictive gain were $\rho=-0.346$ ($p=0.206$) for
depression and $\rho=-0.407$ ($p=0.132$) for anxiety. Neither
association was statistically significant.
}
\label{fig:heterogeneity_gain}
\end{figure}
For depression, the subject-level predictive gain was positive for
6 of the 15 subjects, indicating that DGNMM produced a smaller squared
prediction error than the main-effects GLMM for these individuals.
Figure~\ref{fig:heterogeneity_gain}(a) shows substantial heterogeneity
in the magnitude of these gains. Most subjects had gains close to zero,
and both positive and negative values were observed. However, one subject
with a relatively small estimated subject-specific effect
($|\hat e_i|\approx0.2$) exhibited a predictive gain exceeding 2.5,
and another subject with a similarly small-to-moderate effect magnitude
showed a gain of approximately 0.6. In contrast, subjects with some of
the largest estimated subject-specific effects had predictive gains
close to zero. Thus, the improvement in depression prediction was not
uniform across individuals and appears to be driven in part by relatively
large improvements for a small subset of subjects.

For anxiety, the subject-level pattern was more consistent. DGNMM
produced a positive predictive gain for all 15 subjects in the
exploratory analysis. As shown in Figure~\ref{fig:heterogeneity_gain}(b),
most gains were relatively small and concentrated near zero, but several
subjects showed more substantial improvements. The largest predictive
gain was approximately 0.20 at $|\hat e_i|\approx0.2$, with additional
gains of approximately 0.08 and 0.07 observed at relatively small to
moderate effect magnitudes. Subjects with larger estimated
subject-specific effects generally showed only modest predictive gains.
Thus, compared with depression, the improvement for anxiety was more
consistent in direction across subjects but less dominated by a single
large gain.

We further examined whether the magnitude of the estimated
subject-specific effect was systematically associated with predictive
gain $G_i$. Because the application contained only 15 subjects, this
association was assessed using Spearman's rank correlation and is
interpreted as exploratory. For depression, the correlation between
$|\hat e_i|$ and $G_i$ was $\rho=-0.346$ ($p=0.206$), while for
anxiety it was $\rho=-0.407$ ($p=0.132$). Although both correlations
were negative, neither was statistically significant. The direction of
the estimated correlations is consistent with the visual patterns in
Figure~\ref{fig:heterogeneity_gain}: the largest predictive gains occur
among subjects with relatively small estimated subject-specific effects,
whereas larger values of $|\hat e_i|$ are not accompanied by
systematically larger gains. Therefore, the results do not support the
hypothesis that greater estimated subject-specific heterogeneity, as
measured by $|\hat e_i|$, necessarily corresponds to a greater
predictive benefit from DGNMM.

These findings provide additional context for the overall prediction
results. DGNMM achieved lower average prediction error than the
main-effects GLMM for both depression and anxiety, but the distribution
of that improvement across subjects differed between the two outcomes.
For depression, predictive gains were heterogeneous and concentrated
among a subset of individuals, whereas for anxiety the direction of the
gain was more consistent across subjects. Importantly, the magnitude of
the estimated subject-specific effect alone did not identify the subjects
with the greatest predictive improvement. Given the small sample of
15 subjects and the absence of statistically significant correlations,
these subject-level findings should be regarded as descriptive and
exploratory rather than evidence of a general association between
subject-specific heterogeneity and predictive gain.

\section{Conclusion}\label{sec6}

We proposed a Deep Generalized Neural Network Mixed Model (DGNMM) for
longitudinal prediction that combines increased nonlinear representation
capacity with regularized subject-specific effects. The framework is
estimated through prediction-oriented stochastic optimization, providing
a direct approach for jointly learning the population-level response
surface and persistent subject-specific heterogeneity.

The results demonstrate that increased nonlinear representation capacity
is not uniformly beneficial. In the simulations, GLMM performed best
when the population-level response was linear, whereas DGNMM provided
substantial predictive gains for continuous outcomes as moderate and
strong nonlinear structure was introduced. The advantage was less
pronounced for binary outcomes, where GLMM and shallower GNMM
alternatives remained competitive across many settings. The schizophrenia
digital-phenotyping application showed a similar task-dependent pattern:
DGNMM achieved the lowest MSPE for continuous depression and anxiety
prediction, while alternative models remained competitive or preferable
for binary symptom classification. Exploratory subject-level analyses
further showed that the magnitude of the estimated subject-specific
effect was not significantly associated with predictive gain.

Overall, DGNMM provides a flexible prediction framework for longitudinal
problems in which nonlinear population-level structure and
subject-specific heterogeneity may coexist, while the results emphasize
that additional model complexity should be guided by the prediction
setting rather than assumed to be universally advantageous. There is an opportunity for future work to extend the framework to
richer subject-specific structures such as random slopes and time-varying
effects, and to other longitudinal outcomes including counts, time-to-event
responses, and multivariate data. Additionally, developing procedures for
architecture and regularization selection and evaluating the framework in
larger longitudinal studies are important directions for future research.
\newpage

\bmsubsection*{Financial Disclosure}

None reported.

\bmsubsection*{Conflicts of Interest}

The authors declare no conflict of interest.

\bmsubsection*{Supporting Information}

The data supporting the findings of this study are available at
\url{https://onlinelibrary.wiley.com/doi/10.1111/biom.13615}. The code used in this study is available from the corresponding author upon reasonable request.

\end{document}